\newtheorem{definition}{Definition}
\newtheorem{lemma}{Lemma}
\newtheorem{theorem}{Theorem}
\newproof{proof}{Proof}
\newcommand{\Nat}{\ensuremath{\mathrm{I\kern-1.5pt N}}}
\def\mi#1{\mathit{#1}}
\def\post#1{\ensuremath{{#1}\kern-.05ex\bullet}\,}
\def\postnet#1#2{\ensuremath{{#2}\!\kern-.05ex\stackrel{#1}{\bullet}}\,}
\def\posto#1{\ensuremath{{#1}\kern-.05ex\circ}\,}
\def\postonet#1#2{\ensuremath{{#2}\!\kern-.05ex\stackrel{#1}{\circ}}\,}
\journal{~}
\begin{document}

\begin{frontmatter}

%% Title, authors and addresses

%% use the tnoteref command within \title for footnotes;
%% use the tnotetext command for the associated footnote;
%% use the fnref command within \author or \address for footnotes;
%% use the fntext command for the associated footnote;
%% use the corref command within \author for corresponding author footnotes;
%% use the cortext command for the associated footnote;
%% use the ead command for the email address,
%% and the form \ead[url] for the home page:
%%
%% \title{Title\tnoteref{label1}}
%% \tnotetext[label1]{}
%% \author{Name\corref{cor1}\fnref{label2}}
%% \ead{email address}
%% \ead[url]{home page}
%% \fntext[label2]{}
%% \cortext[cor1]{}
%% \address{Address\fnref{label3}}
%% \fntext[label3]{}

\title{Passages in Graphs}

%% use optional labels to link authors explicitly to addresses:
%% \author[label1,label2]{<author name>}
%% \address[label1]{<address>}
%% \address[label2]{<address>}

\author{W.M.P. van der Aalst}

\address{Department of Mathematics and Computer Science,
Technische Universiteit Eindhoven, The Netherlands.\\
BPM Discipline, Queensland University of Technology,
 GPO Box 2434, Brisbane QLD 4001, Australia.\\
WWW: \texttt{www.vdaalst.com}, E-mail: \texttt{w.m.p.v.d.aalst@tue.nl}}

\begin{abstract}
Directed graphs can be partitioned in so-called \emph{passages}.
A passage $P$ is a set of edges such that any two edges sharing the same initial vertex
or sharing the same terminal vertex are both inside $P$ or are both outside of $P$.
Passages were first identified in the context of process mining
where they are used to successfully decompose process discovery and conformance checking problems.
In this article, we examine the properties of passages.
We will show that passages are closed under set operators such as union, intersection and difference.
Moreover, any passage is composed of so-called minimal passages.
These properties can be exploited when decomposing graph-based analysis and computation problems.
\end{abstract}

\begin{keyword}
Directed graphs \sep Process modeling \sep Decomposition
\end{keyword}

\end{frontmatter}

%%
%% Start line numbering here if you want
%%
% \linenumbers

%% main text

\section{Introduction}\label{sec:intro}

Recently, the notion of \emph{passages} was introduced in the context of process mining \cite{aalst-passages-PN2012}.
There it was used to decompose process discovery and conformance checking problems \cite{process-mining-book-2011}.
Any directed graph can be partitioned into a collection of non-overlapping passages.
Analysis can be done per passage and the results can be combined easily, e.g.,
for conformance checking a process model can be decomposed into process fragments using passages
and traces in the event log fit the overall model if and only if they fit all process fragments.

As shown in this article, passages have various elegant problems. Although the notion of passages
is very simple, we could not find this graph notion in existing literature on (directed) graphs \cite{digraphs-book-2004,handbook_graph_theory2004}.
Classical graph partitioning approaches \cite{karpis_graph-partitioning1998,kernighan_graph-partitioning1970} decompose the vertices of a graph rather than the edges,
i.e., the goal there is to decompose the graph in smaller components of similar size that have few connecting edges.
Some of these notions have been extended to vertex-cut graph partitioning \cite{feige-vertex-graph-partitioning2005,kim-vertex-graph-partitioning2012}.
However, these existing notions are not applicable in our problem setting where components need to behave synchronously
and splits and joins cannot be partitioned.
We use passages to \emph{decompose a graph into sets of edges such that all
edges sharing an initial vertex or terminal vertex are in the same set}.
To the best of our knowledge, the notion of passages has not been studied before.
However, we believe that this notion can be applied in various domains (other than process mining).
Therefore, we elaborate on the foundational properties of passages.

The remainder is organized as follows.
In Section~\ref{sec:defpas} we define the notion of passages, provide alternative characterizations,
and discuss elementary properties.
Section~\ref{sec:paspart} shows that any graph can be partitioned into passages and that any passage is composed of so-called minimal passages.
Section~\ref{sec:pasgraph} introduces passage graphs visualizing the relations between passages.
Graphs may be partitioned in different ways. Therefore, Section~\ref{sec:quality} discusses the quality of passage partitionings.
Section~\ref{sec:concl} concludes this article.

\section{Defining Passages}\label{sec:defpas}
Passages are defined on directed graphs, simply referred to as graphs.
\begin{definition}[Graph]
A (directed) graph is a pair $G=(V,E)$ composed of a set of vertices $V$ and a set of edges $E\subseteq V \times V$.
\end{definition}
\begin{figure}[htb]
\centerline{\includegraphics[width=9cm]{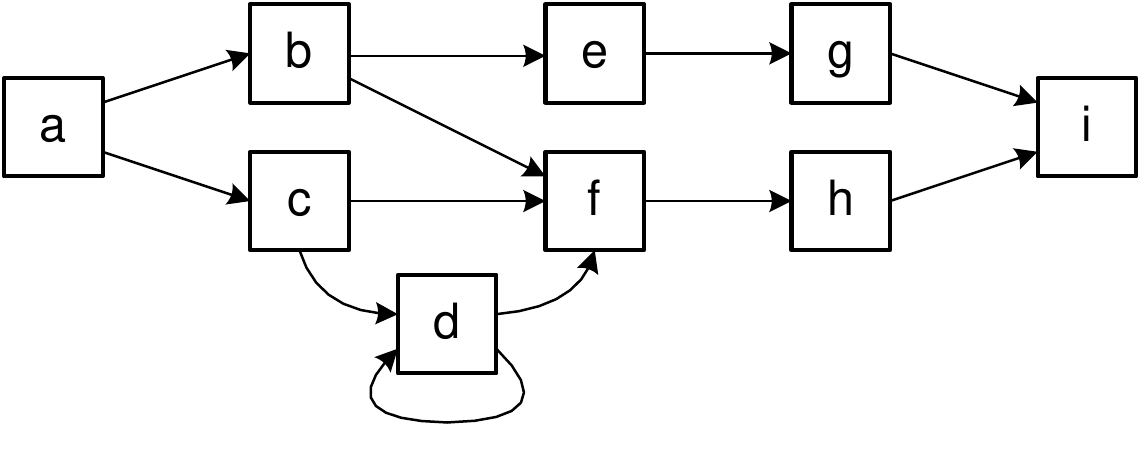}}
\caption{Graph $G_1$ with 9 vertices, 12 edges, and 32 passages.}
\label{fig-passage-init}
\end{figure}
A passage is a set of edges such that any two edges sharing the same initial vertex (tail)
or sharing the same terminal vertex (head) are both inside or both outside of the passage.
For example, $\{(a,b),(a,c)\}$ is a passage in graph $G_1$ shown in Figure~\ref{fig-passage-init}
because there are no other edges having $a$ as initial vertex or $b$ or $c$ as terminal vertex.
\begin{definition}[Passage]
Let $G=(V,E)$ be a graph. $P\subseteq E$ is a passage if for any $(x,y)\in P$
and $\{(x,y'),(x',y)\} \subseteq E$: $\{(x,y'),(x',y)\} \subseteq P$.
$\mi{pas}(G)$ is the set of all passages of $G$.
\end{definition}
Figure~\ref{fig-passage-split} shows 7 of the 32 passages of graph $G_1$ shown in Figure~\ref{fig-passage-init}.
$P_2 = \{(b,e),(b,f),(c,f),(c,d),(d,d),(d,f)\}$ is a passage as there are no other edges having $b$, $c$, or $d$ as initial vertex
or $d$, $e$, or $f$ as terminal vertex. Figure~\ref{fig-passage-split} does not show the two trivial passages: $\emptyset$ (no edges) and $E$ (all edges).
\begin{figure}[htb]
\centerline{\includegraphics[width=10.5cm]{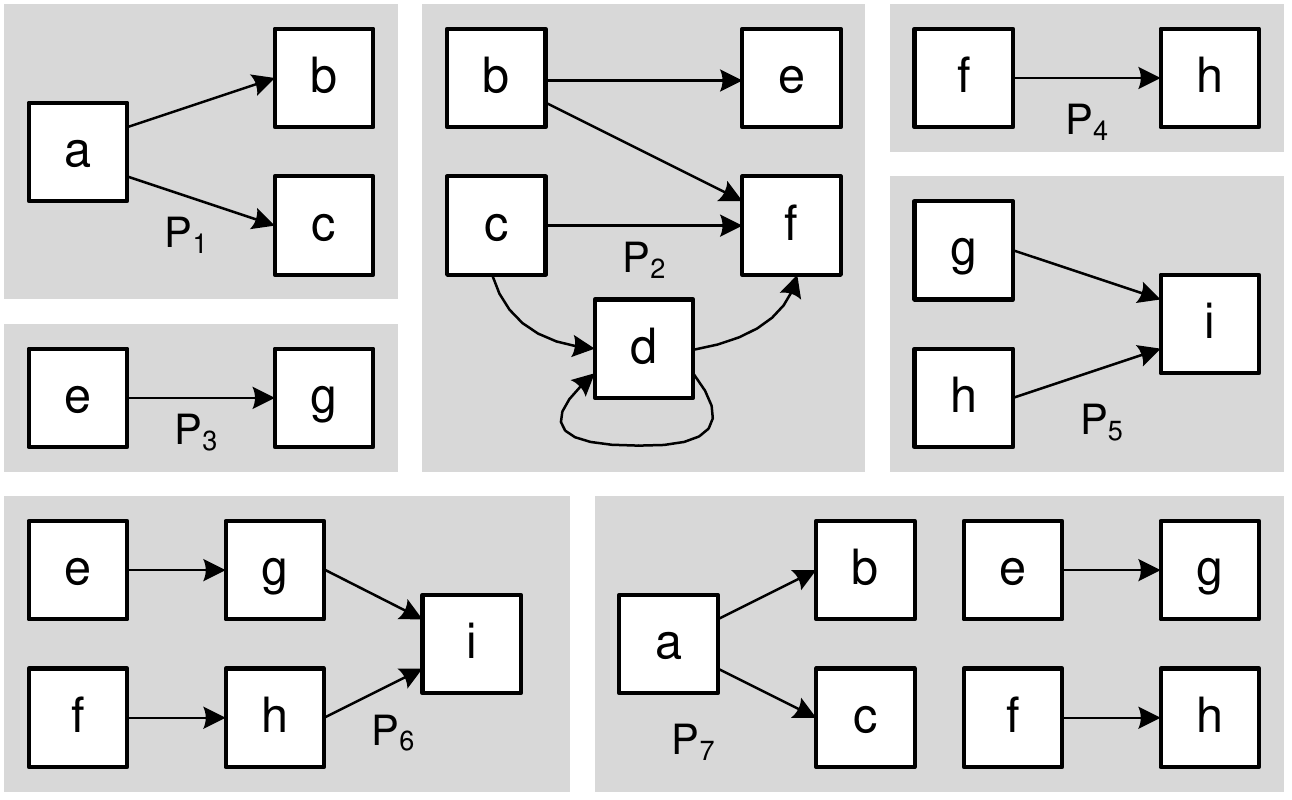}}
\caption{Seven example passages of graph $G_1$ shown in Figure~\ref{fig-passage-init}.}
\label{fig-passage-split}
\end{figure}
\begin{lemma}[Trivial Passages]
Let $G=(V,E)$ be a graph. The \emph{empty passage} $\emptyset$ and the \emph{full passage} $E$ are trivial passages of $G$.
Formally: $\{\emptyset, E\} \subseteq \mi{pas}(G)$ for any $G$.
\end{lemma}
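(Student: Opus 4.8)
The plan is to verify the defining condition of a passage directly for each of the two candidate sets, which should be essentially immediate from the definition. Recall that $P \subseteq E$ is a passage if for every $(x,y) \in P$ and every pair $\{(x,y'),(x',y)\} \subseteq E$ we have $\{(x,y'),(x',y)\} \subseteq P$.

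For the empty passage, I would observe that the universal quantification over $(x,y) \in P$ is vacuous when $P = \emptyset$: there is no edge $(x,y)$ to instantiate, so the implication holds trivially, and hence $\emptyset \in \mi{pas}(G)$. For the full passage, I would note that $E \subseteq E$ holds, and that for any $(x,y) \in E$ and any $\{(x,y'),(x',y)\} \subseteq E$, the required conclusion $\{(x,y'),(x',y)\} \subseteq P$ is exactly $\{(x,y'),(x',y)\} \subseteq E$, which is the hypothesis itself; so the condition is satisfied and $E \in \mi{pas}(G)$. Combining the two gives $\{\emptyset, E\} \subseteq \mi{pas}(G)$.

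There is essentially no obstacle here: the statement is a sanity check that the two extreme subsets of $E$ meet the passage condition, one by vacuity of the hypothesis and the other by the conclusion coinciding with a hypothesis. The only thing to be a little careful about is phrasing the vacuous-quantifier argument for $\emptyset$ cleanly, so that it reads as a genuine (if trivial) verification rather than an appeal to intuition.
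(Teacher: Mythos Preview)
Your proposal is correct: verifying the passage condition directly, with the $\emptyset$ case handled by vacuity of the outer universal quantifier and the $E$ case by the conclusion coinciding with the hypothesis, is exactly the right (and only reasonable) argument. The paper in fact gives no proof at all for this lemma, treating it as self-evident, so your write-up is already more explicit than what appears there.
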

Some of the passages in Figure~\ref{fig-passage-split} are overlapping:
$P_6 = P_3 \cup P_4 \cup P_5$ and $P_7 = P_1 \cup P_3 \cup P_4$.
To combine passages into new passages and to reason about the properties of passages we define the following notations.
\begin{definition}[Passage Operators]
Let $G=(V,E)$ be a graph with $P,P_1,P_2 \subseteq E$.
$P_1 \cup P_2$, $P_1 \cap P_2$, $P_1 \setminus P_2$, $P_1 = P_2$, $P_1 \neq P_2$,
$P_1 \subseteq P_2$, and $P_1 \subset P_2$ are defined as usual.
$\pi_1(P) = \{x \mid (x,y) \in P\}$ are the initial vertices of $P$,
$\pi_2(P) = \{y \mid (x,y) \in P\}$ are the terminal vertices of $P$,
$P_1 \# P_2$ if and only if $P_1 \cap P_2 = \emptyset$,
$P_1 \triangleright P_2$ if and only if $\pi_2(P_1) \cap \pi_1(P_2) \neq \emptyset$.
\end{definition}
Note that $d$ is both an initial and terminal vertex of $P_2$ in Figure~\ref{fig-passage-split}:
$\pi_1(P_2) = \{b,c,d\}$ and $\pi_2(P_2) = \{d,e,f\}$.
$P_5 \# P_7$ because $P_5 \cap P_7 = \emptyset$.
$P_4 \triangleright P_5$ because $\pi_2(P_4) \cap \pi_1(P_5) = \{ h \} \neq \emptyset$.

The union, intersection and difference of passages yield passages.
For example, $P_7 = P_1 \cup P_3 \cup P_4$ is a passage composed of three smaller passages.
$P_5 = P_6 \setminus P_7$ and $P_6 \cap P_7 = P_3 \cup P_4$  are passages.
\begin{lemma}[Passages Are Closed under $\cup$, $\cap$ and $\setminus$]\label{lem:closed}
Let $G=(V,E)$ be a graph. If $P_1,P_2 \in \mi{pas}(G)$ are two passages, then
$P_1 \cup P_2$, $P_1 \cap P_2$, and $P_1 \setminus P_2$ are also passages.
\end{lemma}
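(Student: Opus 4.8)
The plan is to verify the passage condition directly from the definition for each of the three set operations, treating them uniformly. Recall that a set $P\subseteq E$ is a passage exactly when, for every edge $(x,y)\in P$, \emph{every} edge of $E$ that shares the tail $x$ and \emph{every} edge of $E$ that shares the head $y$ also lies in $P$. So for each operator I would fix an arbitrary edge $(x,y)$ in the resulting set, fix arbitrary edges $(x,y')\in E$ and $(x',y)\in E$ witnessing the hypothesis, and then chase membership through the Boolean structure, using that $P_1$ and $P_2$ are themselves passages.

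First I would treat the union. Suppose $(x,y)\in P_1\cup P_2$; without loss of generality $(x,y)\in P_1$. Since $P_1$ is a passage, $(x,y')\in P_1\subseteq P_1\cup P_2$ and $(x',y)\in P_1\subseteq P_1\cup P_2$, and we are done. Next, intersection: if $(x,y)\in P_1\cap P_2$, then $(x,y)\in P_1$ gives $\{(x,y'),(x',y)\}\subseteq P_1$ and $(x,y)\in P_2$ gives $\{(x,y'),(x',y)\}\subseteq P_2$, hence both edges lie in $P_1\cap P_2$. Finally, difference: suppose $(x,y)\in P_1\setminus P_2$, so $(x,y)\in P_1$ and $(x,y)\notin P_2$. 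From $(x,y)\in P_1$ we get $\{(x,y'),(x',y)\}\subseteq P_1$. It remains to see that neither $(x,y')$ nor $(x',y)$ lies in $P_2$: if, say, $(x,y')\in P_2$, then applying the passage property of $P_2$ to the edge $(x,y')$ — which shares tail $x$ with $(x,y)\in E$ — forces $(x,y)\in P_2$, contradicting $(x,y)\notin P_2$; symmetrically for $(x',y)$, using that $(x',y)$ shares head $y$ with $(x,y)$. Hence $\{(x,y'),(x',y)\}\subseteq P_1\setminus P_2$.

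I do not expect any serious obstacle here; the only point requiring a little care is the difference case, where one must invoke the passage property of $P_2$ "in reverse" — i.e. observe that if a neighbouring edge were in $P_2$, then $(x,y)$ would be dragged into $P_2$ as well. Noting that the passage condition is symmetric in the roles of "shares a tail" and "shares a head" lets one state the two subcases together rather than repeating the argument. A brief remark that union and intersection extend by induction to arbitrary finite families (and in fact to arbitrary families, since the condition is preserved under arbitrary unions and intersections) could be added, but is not needed for the statement as given.
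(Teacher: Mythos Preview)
Your proof is correct and follows essentially the same approach as the paper: both verify the passage condition directly by fixing $(x,y)$ in the resulting set together with neighbouring edges $(x,y'),(x',y)\in E$, and then use the passage property of $P_1$ and $P_2$ to push membership through. The only cosmetic difference is in the $\setminus$ case, where the paper states in one line that $(x,y)\notin P_2$ implies $\{(x,y'),(x',y)\}\cap P_2=\emptyset$, whereas you spell out this contrapositive explicitly.
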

\begin{proof}
Let $P_1,P_2 \in \mi{pas}(G)$, $(x,y) \in P_1 \cup P_2$, and $\{(x,y'),\allowbreak (x',y)\} \subseteq E$. We need to show that $\{(x,y'),(x',y)\} \subseteq P_1 \cup P_2$.
    If $(x,y) \in P_1$, then $\{(x,y'),(x',y)\} \subseteq P_1 \subseteq P_1 \cup P_2$.
    If $(x,y) \in P_2$, then $\{(x,y'),(x',y)\} \subseteq P_2 \subseteq P_1 \cup P_2$.

Let $P_1,P_2 \in \mi{pas}(G)$, $(x,y) \in P_1 \cap P_2$, and $\{(x,y'),\allowbreak (x',y)\} \subseteq E$. We need to show that $\{(x,y'),(x',y)\} \subseteq P_1 \cap P_2$.
    Since $(x,y) \in P_1$, $\{(x,y'),(x',y)\} \subseteq P_1$.
    Since $(x,y) \in P_2$, $\{(x,y'),(x',y)\} \subseteq P_2$. Hence, $\{(x,y'),(x',y)\} \subseteq P_1 \cap P_2$.

Let $P_1,P_2 \in \mi{pas}(G)$, $(x,y) \in P_1 \setminus P_2$, and $\{(x,y'),(x',y)\} \subseteq E$. We need to show that $\{(x,y'),(x',y)\} \subseteq P_1 \setminus P_2$.
    Since $(x,y) \in P_1$, $\{(x,y'),(x',y)\} \subseteq P_1$.
    Since $(x,y) \not\in P_2$, $\{(x,y'),(x',y)\} \cap P_2 = \emptyset$. Hence, $\{(x,y'),(x',y)\} \subseteq P_1 \setminus P_2$. \qed
\end{proof}
A passage is fully characterized by both the set of initial vertices and the set of terminal vertices.
Therefore, the following properties hold.
\begin{lemma}[Passage Properties]
Let $G=(V,E)$ be a graph. For any $P_1,P_2 \in \mi{pas}(G)$:
\begin{itemize}
\item $\pi_1(P_1) = \pi_1(P_2) \ \Leftrightarrow \ P_1 = P_2  \ \Leftrightarrow \ \pi_2(P_1) = \pi_2(P_2)$,
\item $P_1 \# P_2 \ \Leftrightarrow \ \pi_1(P_1) \cap \pi_1(P_2) = \emptyset$, and
\item $P_1 \# P_2 \ \Leftrightarrow \ \pi_2(P_1) \cap \pi_2(P_2) = \emptyset$.
\end{itemize}
\end{lemma}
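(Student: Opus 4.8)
The plan is to first isolate the structural fact hinted at in the sentence preceding the lemma: a passage is completely recovered from its set of initial vertices, and symmetrically from its set of terminal vertices. Concretely, I would establish as an auxiliary claim that for every $P \in \mi{pas}(G)$
\[
 P = \{(x,y) \in E \mid x \in \pi_1(P)\} = \{(x,y) \in E \mid y \in \pi_2(P)\} .
\]
The inclusions $P \subseteq \{(x,y) \in E \mid x \in \pi_1(P)\}$ and $P \subseteq \{(x,y) \in E \mid y \in \pi_2(P)\}$ are immediate from the definitions of $\pi_1$ and $\pi_2$. For the reverse inclusion of the first identity, I take $(x,y) \in E$ with $x \in \pi_1(P)$, pick some $(x,y'') \in P$, and apply the passage property to that edge with the instantiations $y' := y$ (allowed since $(x,y) \in E$) and $x' := x$ (allowed since $(x,y'') \in E$), obtaining $\{(x,y),(x,y'')\} \subseteq P$, hence $(x,y) \in P$. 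The reverse inclusion of the second identity is proved the same way, exchanging the roles of tail and head.

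Given this claim, the first bullet is immediate: $P_1 = P_2$ trivially implies $\pi_1(P_1)=\pi_1(P_2)$ and $\pi_2(P_1)=\pi_2(P_2)$, while $\pi_1(P_1)=\pi_1(P_2)$ forces $P_1 = \{(x,y)\in E \mid x \in \pi_1(P_1)\} = \{(x,y)\in E \mid x \in \pi_1(P_2)\} = P_2$ by the claim, and symmetrically from $\pi_2(P_1)=\pi_2(P_2)$.

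For the second bullet, the direction $\pi_1(P_1)\cap\pi_1(P_2)=\emptyset \Rightarrow P_1 \# P_2$ is the easy one: any $(x,y)\in P_1\cap P_2$ would place $x$ in both $\pi_1(P_1)$ and $\pi_1(P_2)$. Conversely, assuming $P_1 \# P_2$, if some $x \in \pi_1(P_1)\cap\pi_1(P_2)$ existed I would pick $(x,y_2)\in P_2$ and apply the claim to $P_1$ (using $x \in \pi_1(P_1)$ and $(x,y_2)\in E$) to get $(x,y_2)\in P_1$, so $(x,y_2)\in P_1\cap P_2$, contradicting $P_1 \# P_2$. The third bullet is the mirror image of the second, with $\pi_2$ in place of $\pi_1$ and heads in place of tails, using the terminal-vertex form of the claim.

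I do not anticipate a genuine obstacle; the only point needing care is the bookkeeping in the auxiliary claim, specifically verifying that the instantiation $x' := x$ (resp. $y' := y$) in the passage property is legitimate — which holds precisely because the chosen edge already lies in $P \subseteq E$. Everything after the claim is routine set manipulation, so I would keep those parts terse.
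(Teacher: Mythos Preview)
Your proposal is correct and follows essentially the same approach as the paper: both rest on the observation that a passage $P$ is determined by $\pi_1(P)$ (equivalently by $\pi_2(P)$), i.e., $P=\{(x,y)\in E\mid x\in\pi_1(P)\}$, and then derive all three bullets from that. The only difference is that you spell out the verification of this identity and of the second bullet in full detail, whereas the paper states them as immediate consequences of the passage definition.
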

\begin{proof}
$X=\pi_1(P)$ implies $P = \{(x,y) \in E \mid x \in X\}$ (definition of passages).
Hence, $\pi_1(P_1) = \pi_1(P_2) \ \Rightarrow \ P_1 = P_2$ (because a passage $P$ is fully determined by $\pi_1(P)$). The other direction ($\Leftarrow$) holds trivially.
A passage $P$ is also fully determined by $\pi_2(P)$. Hence, $\pi_2(P_1) = \pi_2(P_2) \ \Rightarrow \ P_1 = P_2$. Again the other direction ($\Leftarrow$) holds trivially.

The second property follows from the observation that two passages share an edge if and only if the initial vertices overlap.
If two passages share an edge $(x,y)$, they also share initial vertex $x$.
If two passage share initial vertex $x$, then they also share some edges $(x,y)$.

Due to symmetry, the same holds for the third property.\qed
\end{proof}
The following lemma shows that a passage can be viewed as a fixpoint: $P=((\pi_1(P)\times V) \cup (V \times \pi_2(P)))\cap E$.
This property will be used to construct minimal passages.

\begin{lemma}[Another Passage Characterization]\label{lemcharalt}
Let $G=(V,E)$ be a graph. $P \subseteq E$ is a passage if and only if $P=((\pi_1(P)\times V) \cup (V \times \pi_2(P)))\cap E$.
\end{lemma}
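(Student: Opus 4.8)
The statement is an ``if and only if'', so I would prove the two directions separately, and for the forward direction I would further split into the two inclusions. Throughout, write $Q = ((\pi_1(P)\times V) \cup (V \times \pi_2(P)))\cap E$ for the candidate fixpoint expression; the goal is to show $P$ is a passage iff $P = Q$.

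\emph{($\Leftarrow$)} Suppose $P = Q$. I would verify the passage condition directly from the definition. Take $(x,y) \in P$ and suppose $\{(x,y'),(x',y)\} \subseteq E$. Since $(x,y)\in P$ we have $x \in \pi_1(P)$, hence $(x,y') \in (\pi_1(P)\times V)\cap E \subseteq Q = P$. Similarly $y \in \pi_2(P)$, so $(x',y) \in (V\times\pi_2(P))\cap E \subseteq Q = P$. Thus $\{(x,y'),(x',y)\}\subseteq P$, so $P$ is a passage. This direction is essentially a one-line unfolding and should cause no trouble.

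\emph{($\Rightarrow$)} Suppose $P$ is a passage. The inclusion $P \subseteq Q$ is immediate and holds for \emph{any} $P\subseteq E$: if $(x,y)\in P$ then $x\in\pi_1(P)$ and $(x,y)\in E$, so $(x,y)\in (\pi_1(P)\times V)\cap E \subseteq Q$. The real content is $Q \subseteq P$. Take $(a,b)\in Q$; then $(a,b)\in E$ and either $a\in\pi_1(P)$ or $b\in\pi_2(P)$. In the first case there is some $b''$ with $(a,b'')\in P$; applying the passage property to the edge $(a,b'')\in P$ together with the pair $\{(a,b),(a,b'')\}\subseteq E$ (taking the ``$y'$'' to be $b$ and the ``$x'$'' to be $a$) forces $(a,b)\in P$. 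The second case is symmetric, using an edge $(a'',b)\in P$ and applying the passage property with ``$x'$'' equal to $a$. Hence $(a,b)\in P$, establishing $Q\subseteq P$ and therefore $P = Q$.

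The only mildly delicate point is the bookkeeping in the $Q\subseteq P$ step: the passage definition quantifies over edges $\{(x,y'),(x',y)\}\subseteq E$ relative to a member $(x,y)\in P$, so one must choose the right instantiation — in case one, the member of $P$ is $(a,b'')$ and the ``new'' edge being pulled in is $(a,b)$, which requires noting that $(a'',b'')=(a,b'')$ itself is an allowed choice for the pair (i.e. the definition still applies when $x'=x$ or $y'=y$). Once that is observed, both cases close immediately. I expect this instantiation choice to be the main (very minor) obstacle; everything else is routine set manipulation.
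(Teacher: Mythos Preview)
Your proof is correct and follows essentially the same approach as the paper. The backward direction is identical; for the forward direction the paper actually establishes the slightly stronger pair of equalities $P=(\pi_1(P)\times V)\cap E$ and $P=(V\times\pi_2(P))\cap E$ separately (invoking the earlier observation that a passage is determined by its initial, respectively terminal, vertices) and then takes their union, whereas you verify only the union via a direct case split --- but the underlying argument is the same.
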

\begin{proof}
Suppose $P$  is a passage: it is fully characterized by $\pi_1(P)$ and $\pi_2(P)$.
Take all edges leaving from $\pi_1(P)$: $P=(\pi_1(P)\times V)\cap E$.
Take all edges entering $\pi_2(P)$: $P=(V \times \pi_2(P))\cap E$.
Hence, $P= (\pi_1(P)\times V)\cap E = (V \times \pi_2(P))\cap E$. So, $P=((\pi_1(P)\times V) \cup (V \times \pi_2(P)))\cap E$.

Suppose $P=((\pi_1(P)\times V) \cup (V \times \pi_2(P)))\cap E$.
Let $(x,y)\in P$ and $\{(x,y'),(x',y)\} \subseteq E$.
Clearly, $(x,y')  \in (\pi_1(P)\times V)\cap E$ and $(x',y) \in (V \times \pi_2(P))\cap E$.
Hence, $\{(x,y'),(x',y)\} \subseteq ((\pi_1(P)\times V) \cup (V \times \pi_2(P)))\cap E = P$. \qed
\end{proof}

\section{Passage Partitioning}\label{sec:paspart}

After introducing the notion of passages and their properties, we now show that graph can be \emph{partitioned} using passages.
For example, the set of passages $\{P_1,P_2,P_3,P_4,\allowbreak P_5\}$ in Figure~\ref{fig-passage-split} partitions $G_1$.
Other passage partitionings for graph $G_1$ are $\{P_2,P_5,P_7\}$ and $\{P_1,P_2,P_6\}$.
\begin{definition}[Passage Partitioning]
Let $G=(V,E)$ be a graph.
${\cal P} = \{P_1,P_2, \ldots ,P_n\} \subseteq \mi{pas}(G)\setminus \{\emptyset\}$ is a \emph{passage partitioning} if and only if
$\bigcup {\cal P} = E$ and $\forall_{1 \leq i < j \leq n}\ \allowbreak P_i \# P_j$.
\end{definition}
Any passage partitioning ${\cal P}$ defines an equivalence relation on the set of edges.
For $e_1,e_2 \in E$, $e_1 \sim_{{\cal P}} e_2$
if there exists a $P\in {\cal P}$ with $\{e_1,e_2\} \subseteq P$.
\begin{lemma}[Equivalence Relation]\label{lem:minarg}
Let $G \allowbreak = (V,E)$ be a graph with passage partitioning ${\cal P}$.
$\sim_{\cal P}$ defines an equivalence relation.
\end{lemma}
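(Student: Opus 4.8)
The plan is to verify the three defining properties of an equivalence relation---reflexivity, symmetry, and transitivity---directly from the definition of $\sim_{\cal P}$ together with the two conditions imposed on a passage partitioning, namely $\bigcup {\cal P} = E$ and pairwise disjointness ($P_i \# P_j$ for all $i \neq j$).

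First, for reflexivity I would take an arbitrary edge $e \in E$. Since $\bigcup {\cal P} = E$, there is some $P \in {\cal P}$ with $e \in P$, and then $\{e,e\} = \{e\} \subseteq P$ witnesses $e \sim_{\cal P} e$. Symmetry is immediate: if $\{e_1,e_2\} \subseteq P$ for some $P \in {\cal P}$, then $\{e_2,e_1\} \subseteq P$ as well, because $\{e_1,e_2\}$ and $\{e_2,e_1\}$ denote the same set.

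The only step with any content is transitivity. I would assume $e_1 \sim_{\cal P} e_2$ and $e_2 \sim_{\cal P} e_3$, witnessed by $P,P' \in {\cal P}$ with $\{e_1,e_2\} \subseteq P$ and $\{e_2,e_3\} \subseteq P'$. The key observation is that $e_2 \in P \cap P'$, so $P \cap P' \neq \emptyset$, i.e.\ $P$ and $P'$ are not $\#$-disjoint; by the defining disjointness condition of a passage partitioning, distinct members of ${\cal P}$ must be disjoint, and hence $P = P'$. Then $\{e_1,e_2,e_3\} \subseteq P$, so in particular $\{e_1,e_3\} \subseteq P$, which gives $e_1 \sim_{\cal P} e_3$.

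I do not expect a genuine obstacle here; the one point to be careful about is reading the partitioning condition correctly---that $P_i \# P_j$ for all $i \neq j$ is exactly what allows a shared edge to collapse the two witnesses into one, which is the crux of transitivity. Note that this argument uses neither that $\emptyset \notin {\cal P}$ nor that the $P_i$ are passages rather than arbitrary sets of edges; only $\bigcup {\cal P} = E$ and pairwise disjointness are needed.
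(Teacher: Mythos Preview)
Your proof is correct and follows essentially the same approach as the paper's own proof: verify reflexivity via $\bigcup {\cal P} = E$, symmetry trivially, and transitivity by observing that a shared edge forces the two witnessing passages to coincide. Your transitivity argument is in fact more explicit than the paper's (which simply asserts that all three edges lie in a common $P$), and your closing remark that only covering and pairwise disjointness are used---not non-emptiness or the passage property---is a valid observation.
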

\begin{proof}
We need to prove that $\sim_{\cal P}$ is reflexive, symmetric, and transitive.
Let $e,e',e'' \in E$.
Clearly, $e \sim_{\cal P} e$ because $e \in E = \bigcup {\cal P}$ (${\cal P}$ is a passage partitioning).
Hence, there must be a $P\in {\cal P}$ with $e \in {\cal P}$ (reflexivity).
If $e \sim_{\cal P} e'$, then $e' \sim_{\cal P} e$ (symmetry).
If $e \sim_{\cal P} e'$ and $e' \sim_{\cal P} e''$,
then there must be a $P\in {\cal P}$ such that $\{e_1,e_2,e_3\} \subseteq P$.
Hence, $e \sim_{\cal P} e''$ (transitivity).
\qed
\end{proof}
Any graph has a passage partitioning, e.g., $\{ E\}$ is always a valid passage partitioning.
However, to decompose analysis one is typically interested in partitioning the graph in as many passages as possible.
Therefore, we introduce the notion of a \emph{minimal} passage.
Passage $P_6$ in Figure~\ref{fig-passage-split} is not minimal because it contains smaller non-empty passages: $P_3$, $P_4$, and $P_5$.
Passage $P_7$ is also not minimal.
Only the first five passages in Figure~\ref{fig-passage-split} ($P_1$, $P_2$, $P_3$, $P_4$ and $P_5$) are minimal.
\begin{definition}[Minimal Passages]
Let $G=(V,E)$ be a graph and $P \in \mi{pas}(G)$ a passage.
$P$ is minimal if and only if there is no non-empty passage $P' \in \mi{pas}(G)\setminus \{\emptyset\}$ such that $P' \subset P$.
$\mi{pas}_{\mi{min}}(G)$ is the set of all non-empty minimal passages.
\end{definition}
Two different minimal passages cannot share the same edge. Otherwise, the difference between both passages would yield a smaller non-empty minimal passage.
Hence, an edge can be used to uniquely identify a minimal passage.
The fixpoint characterization given in Lemma~\ref{lemcharalt} suggests an iterative procedure that starts with a single edge.
In each iteration edges are added that must be part of the same minimal passage. As shown this procedure can be used to determine all minimal passages.

\begin{lemma}[Constructing Minimal Passages]\label{lem:minarg}
Let $G \allowbreak = (V,E)$ be a graph. For any $(x,y)\in E$, there exists precisely one minimal passage $P_{(x,y)} \in \mi{pas}_{\mi{min}}(G)$ such that $(x,y) \in P_{(x,y)}$.
\end{lemma}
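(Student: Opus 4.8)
The plan is to prove existence and uniqueness separately, using the fixpoint characterization of Lemma~\ref{lemcharalt} for existence and the closure under $\setminus$ (Lemma~\ref{lem:closed}) for uniqueness.

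\emph{Existence.} Fix $(x,y)\in E$. Consider the collection $\CA = \{P \in \mi{pas}(G) \mid (x,y)\in P\}$ of all passages containing the edge $(x,y)$. This collection is non-empty because $E\in\CA$ by the Trivial Passages lemma. Define $P_{(x,y)} = \bigcap \CA$. I would first argue that $P_{(x,y)}$ is itself a passage: either by iterating the binary intersection case of Lemma~\ref{lem:closed} (if one wants to stay finitary, note $E$ is the only thing that needs finiteness and one can instead reason directly from the definition of passage, since an arbitrary intersection of passages trivially satisfies the defining implication — if $(x,y)$ lies in every $P\in\CA$ then so do $(x,y')$ and $(x',y)$ for each such $P$). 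Clearly $(x,y)\in P_{(x,y)}$ since $(x,y)$ belongs to every member of $\CA$. It remains to see that $P_{(x,y)}$ is minimal: if $P' \subset P_{(x,y)}$ were a non-empty passage, I would derive a contradiction. This is where the closure under difference is used — but more cleanly, minimality follows because any passage $P''$ with $(x,y)\in P''$ satisfies $P_{(x,y)}\subseteq P''$ by construction, so $P_{(x,y)}$ is already the smallest passage containing $(x,y)$; and a proper non-empty sub-passage $P'\subset P_{(x,y)}$ cannot contain $(x,y)$ (else $P_{(x,y)}\subseteq P'$), hence by Lemma~\ref{lem:closed} the difference $P_{(x,y)}\setminus P'$ is a non-empty passage still containing $(x,y)$ and strictly smaller than $P_{(x,y)}$, contradicting minimality of the intersection. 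So $P_{(x,y)}\in\mi{pas}_{\mi{min}}(G)$.

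\emph{Uniqueness.} Suppose $Q\in\mi{pas}_{\mi{min}}(G)$ also satisfies $(x,y)\in Q$. Then $P_{(x,y)}\subseteq Q$ since $Q\in\CA$ and $P_{(x,y)}$ is the intersection of all of $\CA$. Because $P_{(x,y)}$ is a non-empty passage and $Q$ is minimal, we cannot have $P_{(x,y)}\subset Q$, so $P_{(x,y)} = Q$. Alternatively, and symmetrically, one can argue: $Q\cap P_{(x,y)}$ is a passage (Lemma~\ref{lem:closed}) containing $(x,y)$, hence non-empty, and it is contained in both; minimality of each forces $Q\cap P_{(x,y)} = Q = P_{(x,y)}$.

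The main obstacle I anticipate is the set-theoretic subtlety of using an arbitrary (rather than binary) intersection: Lemma~\ref{lem:closed} is stated for two passages, so to get $\bigcap\CA \in \mi{pas}(G)$ one either appeals directly to the definition of passage (which handles arbitrary intersections with no extra work) or remarks that $E$ is finite so $\CA$ is finite and finitely many applications of Lemma~\ref{lem:closed} suffice. Everything else is routine once the "smallest passage containing $(x,y)$" viewpoint is in place; in particular, the difference-based argument for minimality is exactly the observation already flagged in the paragraph preceding the lemma ("the difference between both passages would yield a smaller non-empty minimal passage").
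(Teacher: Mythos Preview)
Your argument is correct. It coincides with the \emph{second} of the two arguments the paper gives: the paper's primary proof is constructive---start from $\{(x,y)\}$ and iterate $P \mapsto ((\pi_1(P)\times V)\cup(V\times\pi_2(P)))\cap E$ until a fixpoint, then invoke Lemma~\ref{lemcharalt} to conclude the result is a passage---while its closing paragraph sketches exactly your intersection-of-all-passages argument as an alternative route to minimality. One small mismatch between your plan and your execution: you announce you will use Lemma~\ref{lemcharalt} for existence, but your actual existence proof never touches it and relies solely on closure under $\cap$ and $\setminus$ (Lemma~\ref{lem:closed}). The trade-off between the two routes is that the paper's iterative construction yields an explicit algorithm (which the paper exploits to remark that computing all minimal passages is quadratic in $|E|$), whereas your intersection argument is slicker for bare existence and uniqueness but non-constructive.
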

\begin{proof}
Initially, set $P:=\{(x,y)\}$.
Extend $P$ as follows: $P := ((\pi_1(P)\times V) \cup (V \times \pi_2(P)))\cap E$.
Repeat extending $P$ until it does not change anymore.
Finally, return $P_{(x,y)} =P$.
The procedure ends because the number of edges is finite.
If $P = ((\pi_1(P)\times V) \cup (V \times \pi_2(P)))\cap E$ (i.e., $P$ does not change anymore),
then $P$ is indeed a passage (see Lemma~\ref{lemcharalt}).
$P$ is minimal because no unnecessary edges are added:
if $(x,y) \in P$, then any edge starting in $x$ or ending in $y$ has to be included.

To prove the latter one can also consider all passages
${\cal P} = \{P_1, P_2,  \ldots, P_n\}$ that contain $(x,y)$.
The intersection of all such passages $\bigcap {\cal P}$ contains
edge $(x,y)$ and is again a passage because
of Lemma~\ref{lem:closed}. Hence, $\bigcap {\cal P} = P_{(x,y)}$.
\qed
\end{proof}
The construction described in the proof can be used compute all minimal passages
and is quadratic in the number of edges.

$\mi{pas}_{\mi{min}}(G_1) = \{P_1,P_2,P_3,P_4,P_5\}$ for the graph shown in Figure~\ref{fig-passage-init}.
This is also a passage partitioning. (Note that the construction in Lemma~\ref{lem:minarg} is similar to the computation
of so-called clusters in a Petri net \cite{deselesparza}.)

\begin{theorem}[Minimal Passage Partitioning]
Let $G \allowbreak = \allowbreak (V,E)$ be a graph. $\mi{pas}_{\mi{min}}(G)$ is a passage partitioning.
\end{theorem}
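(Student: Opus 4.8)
The plan is to verify directly that $\mi{pas}_{\mi{min}}(G)$ meets the two defining requirements of a passage partitioning: that it covers all edges, $\bigcup \mi{pas}_{\mi{min}}(G) = E$, and that its members are pairwise disjoint with respect to $\#$. Two side conditions come for free: $\mi{pas}_{\mi{min}}(G) \subseteq \mi{pas}(G) \setminus \{\emptyset\}$ holds by definition of $\mi{pas}_{\mi{min}}$, and $\mi{pas}_{\mi{min}}(G)$ is finite since it is a subset of $\powerset(E)$ with $E$ finite, so it can indeed be written as $\{P_1,\ldots,P_n\}$.

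For coverage I would invoke the preceding lemma on constructing minimal passages: for every edge $(x,y) \in E$ there is a minimal passage $P_{(x,y)} \in \mi{pas}_{\mi{min}}(G)$ with $(x,y) \in P_{(x,y)}$. Hence every edge lies in some member of $\mi{pas}_{\mi{min}}(G)$, giving $E \subseteq \bigcup \mi{pas}_{\mi{min}}(G)$; the reverse inclusion is immediate because each passage is by definition a subset of $E$. Therefore $\bigcup \mi{pas}_{\mi{min}}(G) = E$.

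For pairwise disjointness I would argue by contradiction. Suppose $P,P' \in \mi{pas}_{\mi{min}}(G)$ with $P \neq P'$ but $P \cap P' \neq \emptyset$. By Lemma~\ref{lem:closed}, $P \cap P'$ is a passage, and it is non-empty by assumption, so $P \cap P' \in \mi{pas}(G) \setminus \{\emptyset\}$. Since $P \cap P' \subseteq P$ and $P$ is minimal, we cannot have $P \cap P' \subset P$, so $P \cap P' = P$, i.e.\ $P \subseteq P'$. By the symmetric argument using minimality of $P'$, also $P' \subseteq P$, whence $P = P'$, contradicting $P \neq P'$. (Alternatively one may simply cite the uniqueness part of the construction lemma: two minimal passages sharing an edge $(x,y)$ both coincide with $P_{(x,y)}$.) This yields $\forall_{1 \le i < j \le n}\ P_i \# P_j$.

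Combining the two properties with the side conditions shows that $\mi{pas}_{\mi{min}}(G)$ is a passage partitioning. I do not expect any genuine obstacle here, as everything has been set up by the fixpoint characterization (Lemma~\ref{lemcharalt}), the closure lemma (Lemma~\ref{lem:closed}), and the construction lemma; the only subtlety worth flagging is that the disjointness argument must use minimality of \emph{both} passages, since minimality of one alone yields only an inclusion, not equality.
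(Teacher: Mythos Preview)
Your proof is correct and follows essentially the same route as the paper: both derive coverage and disjointness from the construction lemma on minimal passages, with the paper's proof simply citing that lemma for all three conditions. Your explicit disjointness argument via $P \cap P'$ and minimality of both passages is a welcome expansion (the paper's surrounding text uses $P \setminus P'$ instead, but its formal proof just invokes the uniqueness part of the construction lemma, exactly the alternative you mention).
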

\begin{proof}
Let $\mi{pas}_{\mi{min}}(G) = \{ P_1, P_2, \ldots, P_n\}$.
Clearly, $\{ P_1, \allowbreak P_2, \ldots, P_n\} \subseteq \mi{pas}(G)\setminus \{\emptyset\}$,
$\bigcup_{1 \leq i \leq n} P_i = E$ and $\forall_{1 \leq i < j \leq n}\ \allowbreak P_i \# P_j$ (follows from Lemma~\ref{lem:minarg}). \qed
\end{proof}
Figure~\ref{fig-example-part} shows a larger graph $G_2 = (V_2,E_2)$ with $V_2=\{a,b, \ldots ,o\}$ and $E_2=\{(a,b),(b,e), \ldots ,(n,o)\}$.
The figure also shows six passages. These form a passage partitioning.
Each edge has a number that refers to the corresponding passage, e.g., edge $(h,k)$ is part of passage $P_4$.
Passages are shown as rectangles and vertices are put on the boundaries of at most two passages.
Vertex $a$ in Figure~\ref{fig-example-part} is on the boundary of $P_1$ because $(a,b) \in P_1$.
Vertex $b$ is on the boundary of $P_1$ and $P_2$ because $(a,b) \in P_1$ and $(b,e) \in P_2$.
$G_2$ has no isolated vertices, so all vertices are on the boundary of at least one passage.
\begin{figure}[htb]
\centerline{\includegraphics[width=10.5cm]{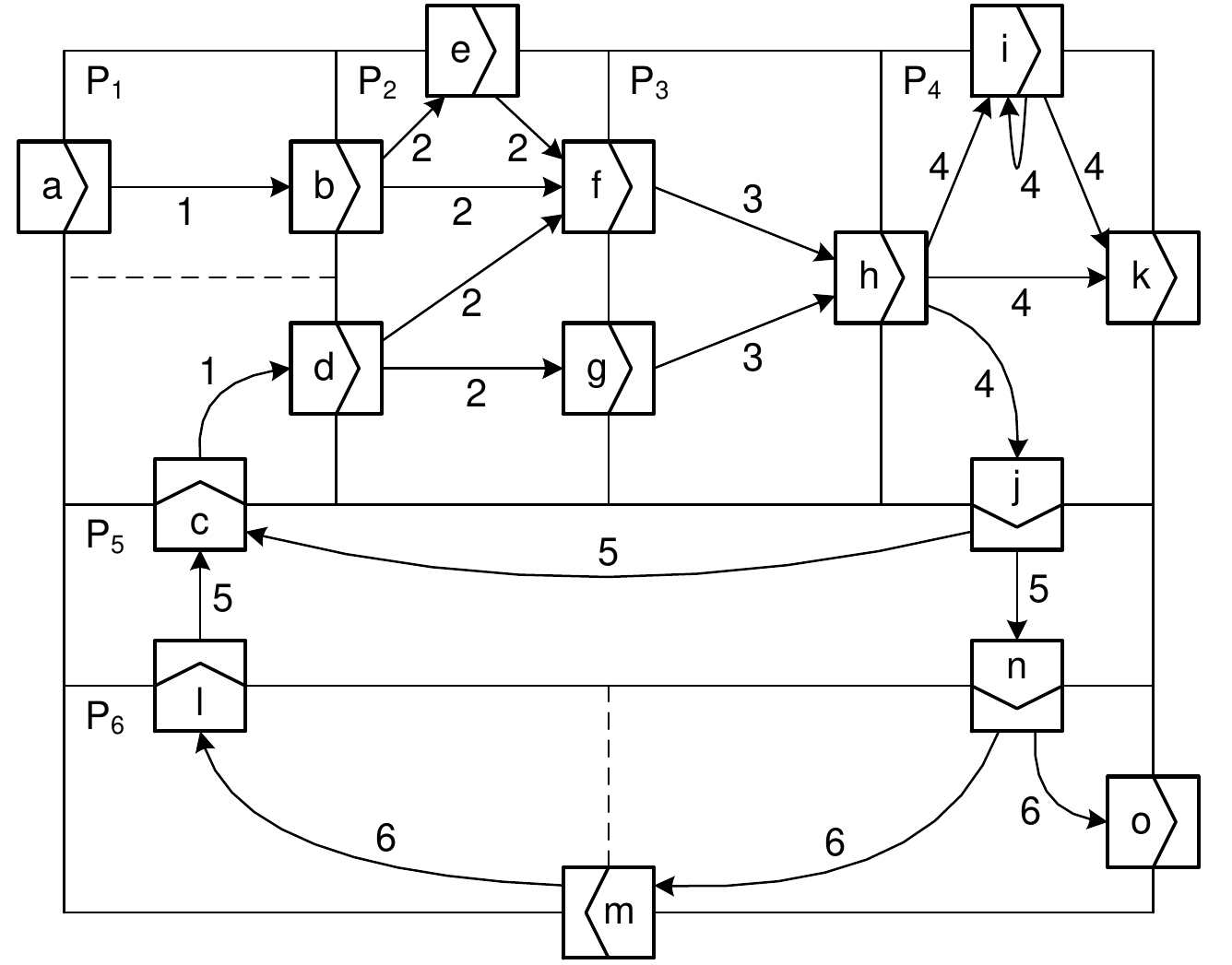}}
\caption{A passage partitioning for graph $G_2$.}
\label{fig-example-part}
\end{figure}

The passage partitioning shown in Figure~\ref{fig-example-part} is not composed of minimal passages
as is indicated by the two dashed lines. Both $P_1$ and $P_6$ are not minimal.
$P_1$ can be split into minimal passages $P_{1a} = \{(a,b)\}$ and  $P_{1b} = \{(c,d)\}$.
$P_6$ can be split into minimal passages $P_{6a} = \{(m,l)\}$ and  $P_{6b} = \{(n,o),(n,m)\}$.
In fact, as shown next, any passage can be decomposed into minimal non-empty passages.

\begin{theorem}[Composing Minimal Passages]
Let $G=(V,E)$ be a graph. For any passage $P\in \mi{pas}(G)$ there is a set of minimal non-empty passages $\{ P_1,P_2, \ldots, P_n\} \subseteq \mi{pas}_{\mi{min}}(G)$
such that $\bigcup_{1 \leq i \leq n} P_i = P$ and $\forall_{1 \leq i < j \leq n}\ P_i \# P_j$.
\end{theorem}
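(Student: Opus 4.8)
The plan is to take the passage $P$ and cover it by the minimal passages associated (via Lemma~\ref{lem:minarg}) with its own edges. Concretely, for each edge $e \in P$ let $P_e \in \mi{pas}_{\mi{min}}(G)$ be the unique minimal passage containing $e$, and let $\{P_1,P_2,\ldots,P_n\}$ be the set $\{\,P_e \mid e \in P\,\}$ (a finite set, since $E$ is finite). I would then verify the three required properties: each $P_i$ is a non-empty minimal passage (immediate by construction and Lemma~\ref{lem:minarg}); the $P_i$ are pairwise disjoint (i.e., $P_i \# P_j$ for $i \neq j$); and $\bigcup_{1\le i\le n} P_i = P$.

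The pairwise-disjointness is inherited from the fact that $\mi{pas}_{\mi{min}}(G)$ is itself a passage partitioning (the preceding Theorem), so any two distinct minimal passages satisfy $P_i \# P_j$; since our $P_1,\ldots,P_n$ are distinct elements of $\mi{pas}_{\mi{min}}(G)$, disjointness is automatic. For the covering equality, the inclusion $P \subseteq \bigcup_i P_i$ is trivial since every $e \in P$ lies in $P_e$. The reverse inclusion $\bigcup_i P_i \subseteq P$ is the one substantive point: I need to show $P_e \subseteq P$ for every $e \in P$. This follows from the second description of $P_e$ given in the proof of Lemma~\ref{lem:minarg}, namely $P_e = \bigcap\{\,Q \in \mi{pas}(G) \mid e \in Q\,\}$; since $P$ is a passage containing $e$, it is one of the sets in that intersection, hence $P_e \subseteq P$. (Alternatively one can argue directly: $\{e\} \subseteq P$, and the fixpoint iteration $Q \mapsto ((\pi_1(Q)\times V)\cup(V\times\pi_2(Q)))\cap E$ that builds $P_e$ keeps its argument inside $P$ at every step, because $P$ satisfies $P = ((\pi_1(P)\times V)\cup(V\times\pi_2(P)))\cap E$ by Lemma~\ref{lemcharalt} and the operator is monotone.)

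The main obstacle, such as it is, is just making the reverse inclusion airtight: one must be careful to invoke the intersection characterization of $P_{(x,y)}$ from Lemma~\ref{lem:minarg} (or the monotonicity-of-the-fixpoint-operator argument) rather than only its iterative construction, since the iterative construction alone does not obviously stay inside an arbitrary passage $P$ without appealing to Lemma~\ref{lemcharalt}. Once that inclusion is in place, the theorem follows by collecting the three verified properties. No nontrivial calculation is needed; the proof is essentially a bookkeeping argument built on Lemmas~\ref{lem:closed}, \ref{lemcharalt}, and \ref{lem:minarg} together with the Minimal Passage Partitioning theorem.
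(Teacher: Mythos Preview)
Your proposal is correct and follows exactly the same construction as the paper: take $\{P_1,\ldots,P_n\}=\{P_{(x,y)}\mid (x,y)\in P\}$, invoke Lemma~\ref{lem:minarg} for minimality, and use disjointness of distinct minimal passages. In fact you are more careful than the paper, which asserts that these passages ``cover all edges in $P$'' without explicitly arguing the reverse inclusion $P_e\subseteq P$; your use of the intersection characterization from the proof of Lemma~\ref{lem:minarg} (or the monotonicity argument via Lemma~\ref{lemcharalt}) cleanly fills that gap.
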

\begin{proof}
Let $\{ P_1,P_2, \ldots, P_n\} = \{ P_{(x,y)} \mid (x,y) \in P\}$.
These passages are minimal (Lemma~\ref{lem:minarg}) and also cover all edges in $P$.
Moreover, two different minimal passages cannot share edges. \qed
\end{proof}
A graph without edges has only one passage.
Hence, if $E = \emptyset$, then $\mi{pas}(G) = \{ \emptyset \}$ (just one passage), $\mi{pas}_{\mi{min}}(G) = \emptyset$ (no minimal non-empty passages), and $\emptyset$ is the only passage partitioning.
If $E \neq \emptyset$, then there is always a trivial singleton passage partitioning $\{ E\}$ and a minimal passage partitioning $\mi{pas}_{\mi{min}}(G)$ (but there may be many more).

\begin{lemma}[Number of Passages]
Let $G=(V,E)$ be a graph with $k=|\mi{pas}_{\mi{min}}(G)|$ minimal non-empty passages.
There are $2^k$ passages and $B_k$ passage partitionings.\footnote{$B_k$ is the $k$-th Bell number (the number of partitions of a set of size $k$), e.g., $B_3=5$, $B_4=15$, and $B_5=52$ \cite{bell-numbers}.}
For any passage partitioning $\{ P_1,P_2, \ldots, P_n\}$ of $G$: $n \leq k \leq |E|$.
\end{lemma}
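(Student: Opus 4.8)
The plan is to establish two bijections: one between $\mi{pas}(G)$ and the power set of $\mi{pas}_{\mi{min}}(G)$, and one between the passage partitionings of $G$ and the set partitions of $\mi{pas}_{\mi{min}}(G)$. All three counting claims and the inequality then follow immediately, with the edge case $E=\emptyset$ (where $k=0$, $2^0=1$ passage, and $B_0=1$ partitioning) handled for free.

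For the first bijection, I would consider the map $\Phi$ sending a passage $P\in\mi{pas}(G)$ to $\Phi(P)=\{P_{(x,y)}\mid (x,y)\in P\}\subseteq\mi{pas}_{\mi{min}}(G)$, where $P_{(x,y)}$ is the unique minimal passage through $(x,y)$ supplied by Lemma~\ref{lem:minarg} (Constructing Minimal Passages). By the Composing Minimal Passages theorem, $\bigcup\Phi(P)=P$, so $\Phi$ is injective. For surjectivity, given any ${\cal S}\subseteq\mi{pas}_{\mi{min}}(G)$, repeated use of Lemma~\ref{lem:closed} shows $\bigcup{\cal S}$ is a passage; since distinct minimal passages are pairwise disjoint (their difference would otherwise be a smaller non-empty passage) and every edge $(x,y)$ lies in exactly one minimal passage, we get $\Phi(\bigcup{\cal S})={\cal S}$. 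Hence $|\mi{pas}(G)|=2^{k}$.

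For the second bijection, if $\{P_1,\ldots,P_n\}$ is a passage partitioning then $\{\Phi(P_1),\ldots,\Phi(P_n)\}$ is a partition of $\mi{pas}_{\mi{min}}(G)$: each $\Phi(P_i)$ is non-empty (as $P_i\neq\emptyset$), the $\Phi(P_i)$ are pairwise disjoint (a common minimal passage $Q$ would satisfy $Q\subseteq P_i\cap P_j=\emptyset$, impossible for non-empty $Q$), and they cover $\mi{pas}_{\mi{min}}(G)$ because $\bigcup_i P_i=E$ and each minimal passage is generated by one of its edges. Conversely, any partition $\{{\cal S}_1,\ldots,{\cal S}_n\}$ of $\mi{pas}_{\mi{min}}(G)$ gives the passage partitioning $\{\bigcup{\cal S}_1,\ldots,\bigcup{\cal S}_n\}$: each block-union is a non-empty passage, the block-unions are pairwise disjoint because distinct minimal passages are disjoint, and their overall union is $\bigcup\mi{pas}_{\mi{min}}(G)=E$ by the Minimal Passage Partitioning theorem. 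These constructions are mutually inverse, so passage partitionings correspond bijectively to set partitions of a $k$-element set, of which there are $B_k$.

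Finally, the inequality $n\leq k\leq|E|$ drops out: a passage partitioning with $n$ blocks maps to a partition of a $k$-set into $n$ non-empty blocks, forcing $n\leq k$; and $k\leq|E|$ since, by the Minimal Passage Partitioning theorem, the $k$ minimal passages are non-empty, pairwise disjoint subsets of $E$, so $k\leq\sum_i|P_i|=|E|$. The main obstacle is not any deep argument but the bookkeeping needed to verify that $\Phi$ (and its partition-level counterpart) are genuine mutual bijections — specifically the identity $\Phi(\bigcup{\cal S})={\cal S}$, which relies essentially on distinct minimal passages being disjoint together with uniqueness of $P_{(x,y)}$; once that is in hand the rest is routine.
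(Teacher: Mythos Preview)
Your proposal is correct and takes essentially the same approach as the paper: both arguments rest on the correspondence between passages and subsets of $\mi{pas}_{\mi{min}}(G)$ (and, at the partition level, between passage partitionings and set partitions of $\mi{pas}_{\mi{min}}(G)$). The paper's proof is a terse sketch of this idea, whereas you spell out the bijections $\Phi$ and their inverses explicitly and verify the needed bookkeeping; the content is the same.
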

\begin{proof}
Any passage can be composed of minimal non-empty passages. Hence, there are $2^k$ passages.
$B_k$ is the number of partitions of a set with $k$ members, thus corresponding to the number of passage partitionings.

If there are no edges, there are no minimal non-empty passages ($k=0$) and there is only one possible passage partitioning: $\emptyset$. Hence, $n=0$.
If $E \neq \emptyset$, then $\mi{pas}_{\mi{min}}(G)$ is the most refined passage partitioning.
There are at most $|E|$ minimal non-empty passages as they cannot share edges. Hence, $n \leq k \leq |E|$.
Note that $n \geq 1$ if $E \neq \emptyset$.
\qed
\end{proof}
Graph $G_2$ in Figure~\ref{fig-example-part} has $2^{8} = 256$ passages and $B_8 = 4140$ passage partitionings.

\section{Passage Graphs}\label{sec:pasgraph}

Passage partitionings can be visualized using \emph{passage graphs}.
To relate passages, we first define the input/output vertices of a passage.

\begin{definition}[Input and Output Vertices]
Let $G=(V,E)$ be a graph and $P \in \mi{pas}(G)$ a passage.
$\mi{in}(P) = \pi_1(P) \setminus \pi_2(P)$ are the input vertices of $P$,
$\mi{out}(P) = \pi_2(P) \setminus \pi_1(P)$ are the output vertices of $P$, and
$\mi{io}(P) = \pi_1(P) \cap \pi_2(P)$ are the input/output vertices of $P$.
\end{definition}
Note the difference between input, output, and input/output vertices on the one hand
and the initial and terminal vertices of a passage on the other hand.
Given a passage partitioning, there are five types of vertices:
isolated vertices, input vertices, output vertices, connecting vertices, and local vertices.

\begin{definition}[Five Types of Vertices]
Let $G=(V,E)$ be a graph and ${\cal P} = \{ P_1,P_2, \ldots, P_n\}$ a passage partitioning.
$V_{\mi{iso}} = V \setminus (\pi_1(E) \cup \pi_2(E))$ are the isolated vertices of ${\cal P}$,
$V_{\mi{in}} = \pi_1(E) \setminus \pi_2(E)$ are the input vertices of ${\cal P}$,
$V_{\mi{out}} = \pi_2(E) \setminus \pi_1(E)$ are the output vertices of ${\cal P}$,
$V_{\mi{con}} = \bigcup_{i \neq j} \pi_2(P_i) \cap \pi_1(P_j)$ are the connecting vertices of ${\cal P}$,
$V_{\mi{loc}} = \bigcup_{i} \pi_1(P_i) \cap \pi_2(P_i)$ are the local vertices of ${\cal P}$.
\end{definition}
Note that $V =
V_{\mi{iso}} \cup
V_{\mi{in}} \cup
V_{\mi{out}} \cup
V_{\mi{con}} \cup
V_{\mi{loc}}$ and the five sets are pairwise disjoint, i.e., they partition $V$.
In the passage partitioning shown in Figure~\ref{fig-example-part}:
$a$ is the only input vertex,
$k$ and $o$ are output vertices, and
$e$, $i$ and $m$ are local vertices.
All other vertices are connecting vertices.

\begin{definition}[Passage Graph]
Let $G=(V,E)$ be a graph and ${\cal P} = \{ P_1,P_2, \ldots, P_n\}$ a passage partitioning.
$({\cal P},\{(P,P')\in {\cal P} \times {\cal P} \mid P \triangleright P' \})$ is corresponding passage graph .
\end{definition}
Figure~\ref{fig-example-pass-graph} shows a passage graph.
The graph shows the relationships among passages and can be used to partition the vertices $V$ into
$V_{\mi{iso}} \cup
V_{\mi{in}} \cup
V_{\mi{out}} \cup
V_{\mi{con}} \cup
V_{\mi{loc}}$.
\begin{figure}[htb]
\centerline{\includegraphics[width=8cm]{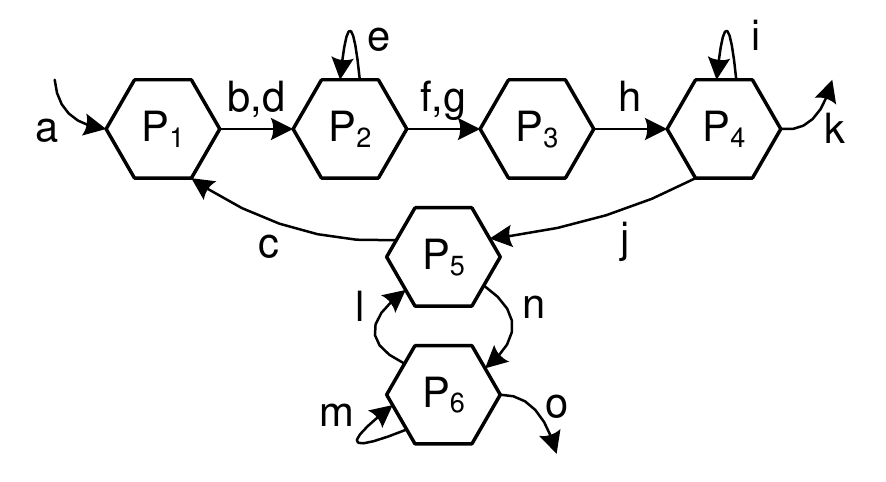}}
\caption{Passage graph based on the passage partitioning shown in Figure~\ref{fig-example-part}.}
\label{fig-example-pass-graph}
\end{figure}

\section{Quality of a Passage Partitioning}\label{sec:quality}

Passages can be used to decompose analysis problems (e.g., conformance checking and process discovery  \cite{aalst-passages-PN2012}).
In the extreme case, there is just one minimal passage covering all edges in the graph. In this case, the graph cannot be decomposed.
Ideally, we would like to use a passage partitioning ${\cal P} = \{ P_1,P_2, \ldots, P_n\}$ that is accurate and that has only small passages.
One could aim at as many passages as possible in order to minimize the average size per passage: $\mi{av}({\cal P}) = \frac{|E|}{n}$ per passage.
One can also aim at minimizing the size of the biggest passage (i.e., $\mi{big}({\cal P}) = \mi{max}_{1 \leq i \leq n}\ |P_i|$) because the biggest passage often takes most of the computation time.

To have smaller passages, one may need to abstract from edges that are less important. To reason about such ``approximate passages'' we define
the input as $G_{\pi} = (V,\pi)$ with vertices $V$ and weight function $\pi \in (V \times V) \rightarrow [-1,1]$.
Given two vertices $x,y \in V$: $\pi(x,y)$ is ``weight'' of the possible edge connecting $x$ and $y$.
If  $\pi(x,y) > 0$, then it is more likely than unlikely that there is an edge connecting $x$ and $y$.
If  $\pi(x,y) < 0$, then it is more unlikely than likely that there is an edge connecting $x$ and $y$.
One can view $\frac{\pi(x,y)+1}{2}$ as the ``probability'' that there is such an edge.
The penalty for leaving out an edge $(x,y)$ with $\pi(x,y) = 0.99$ is much bigger than leaving out an edge $(x',y')$ with $\pi(x',y') = 0.15$.
The accuracy of a passage partitioning ${\cal P} = \{ P_1,P_2, \ldots, P_n\}$ with $E = \cup_{1 \leq i \leq n}\ P_i$ for input $G_{\pi} = (V,\pi)$
can be defined as $\mi{acc}({\cal P}) = \frac{\sum_{(x,y)\in E}  \pi(x,y)}{\mi{max}_{E' \subseteq V \times V} \sum_{(x,y)\in E'}  \pi(x,y)}$.
If $\mi{acc}({\cal P}) = 1$, then all edges having a positive weight are included in some passage and none of edges having a negative weight are included.
Often there is a trade-off between higher accuracy and smaller passages, e.g.,
discarding a potential edge having a low weight may allow for splitting a large passage into two smaller ones.
Just like in traditional graph partitioning \cite{karpis_graph-partitioning1998,kernighan_graph-partitioning1970}, one can look for the passage partitioning that
maximizes $\mi{acc}({\cal P})$ provided that $\mi{av}({\cal P}) \leq \tau_{\mi{av}}$ and/or
$\mi{big}({\cal P}) \leq \tau_{\mi{big}}$, where $\tau_{\mi{av}}$ and $\tau_{\mi{big}}$ are suitably chosen thresholds.
Whether one needs to resort to approximate passages depends on the domain, e.g., when discovering process models from event logs
causalities tend to be uncertain and including all potential causalities results in Spaghetti-like graphs \cite{process-mining-book-2011}, therefore approximate passages are quite useful.

\section{Conclusion}\label{sec:concl}

In this article we introduced the new notion of passages. Passages have been shown to be useful in the domain of process mining.
Given their properties and possible applications in other domains, we examined passages in detail.
Passages are closed under the standard set operators (union, difference, and intersection).
A graph can be partitioned into components based on its minimal passages and any passage is composed of minimal passages.
The theory of passages can be extended to deal with approximate passages. We plan to examine these in the context of process mining,
but are also looking for applications of passage partitionings in other domains (e.g., distributed enactment and verification).

%\section*{References}
%\bibliographystyle{elsarticle-num}
\bibliographystyle{plain}
\bibliography{../../bib/lit}

%% Authors are advised to submit their bibtex database files. They are
%% requested to list a bibtex style file in the manuscript if they do
%% not want to use elsarticle-num.bst.

%% References without bibTeX database:

% \begin{thebibliography}{00}

%% \bibitem must have the following form:
%%   \bibitem{key}...
%%

% \bibitem{}

% \end{thebibliography}

\end{document}